\newtheorem{theorem}{Theorem}
\title{Reconfigurable Intelligent Surfaces Aided Communication: Capacity and Performance Analysis Over Rician Fading Channel}
\author{
\IEEEauthorblockN{Chandradeep Singh, Chia Hsiang Lin}
\vspace*{-0.65cm}
}
\begin{document}
\maketitle
\begin{abstract}
In this work, we consider a single input single output (SISO) system for Reconfigurable Intelligent Surface (RIS) assisted mmWave communication. We consider Rician channel models over user node to RIS and RIS to Access Point (AP). We obtain closed form expressions for capacity with channel state information (CSI) and without CSI at the transmitter. Newly derived capacity expressions are closed form expressions in a very compact form. We also simplified the closed form expressions for average symbol error probability. We also characterize the impacts of key parameters Rician factor $K$ and number of elements on IRS on ergodic capacity with CSI and without CSI at the transmitter.
\end{abstract}
\begin{IEEEkeywords}
Power Control, Water Filling, Ergodic Capacity, Reconfigurable Intelligent Surface (RIS). 
\end{IEEEkeywords}

\section{Introduction}
\IEEEPARstart{T}{ransmission} at high frequencies is required to provide high data rate in next generation wireless networks. Transmission over mmWave is highly susceptible to path loss. To compensate for this path loss, highly directed beams are used for line of sight (LoS) communication using highly directed antenna arrays. Still these beams at high frequencies are attenuated or blocked by the objects in the LoS path between two communicating nodes. RIS is usually deployed to provide a virtual LoS path between two communicating nodes and increases the channel quality between the two nodes. RIS is the key technique that provides the small reconfigurable wireless environment for next generation wireless networks \cite{Wu2020}. RIS has energy efficient low cost reflecting elements which are controlled by the RIS controller.
 
Authors describe the key features of RIS technology, RIS consists of nearly passive elements, low cost reflecting elements with reconfigurable parameters to enhance the spectrum and energy efficiency of next generation mobile networks \cite{Basar2019}. RIS technology has better performance than existing wireless technologies such as multiple input and multiple output (MIMO) systems, multi antenna amplify and forward relaying systems with a small number of antennas \cite{Wu2019,Boulogeorgos2020}. Most of the literature on RIS technology focuses on active and passive beamforming \cite{Wu_2020,Huang2019,Hu2020,Zhou2020}. Upper bound on channel capacity and asymptotic expressions for outage probability over Rician fading models are derived for RIS assisted communication \cite{Tao2020}. Non-central chi-square distribution based approximations are used to estimate the channel distribution of the RIS and upper bounds on bit error rate are derived for indoor and outdoor communication \cite{Yildirim2021}. In \cite{Jia2020}, authors optimize phase shift for instantaneous CSI to maximize average rate and maximize ergodic rate by optimizing the phase shift for statistical CSI in RIS aided communication. In \cite{Guo2020}, authors derive expression for outage probability for maximal ratio transmission (MRT) and optimize the phase shift to minimize the outage. Characterisation of the sum of $\alpha - \mu$ distributed random variates has been studied in \cite{Kong2021}. Low cost adaptive phase shift design with LoS CSI in RIS aided communication has been proposed in \cite{Han2019}. In \cite{Kudathanthirige2020}, outage probability and achievable rate for single input and single output (SISO) systems have been derived for RIS aided communication. In \cite{Yang2020}, accurate approximations of channel distribution for Rayleigh fading and performance matrices are derived for RIS consisting of any number of elements. Recently, outage probability, average symbol error probability and channel capacity over the Rician fading model have been derived for more accurate closed form approximations \cite{Salhab2021}. 

While the aforementioned works study performance analysis, the power control for known CSI with Rician fading in RIS aided communication has not been covered in the literature. We use Laguerre series method given in \cite{Primak2004} to derive channel distribution then derive ergodic capacity with CSI at transmitter. We use a water filling algorithm with CSI for optimal power control as described in \cite{Goldsmith1997}. Our capacity expressions are more simplified than \cite{Salhab2021}.

The rest of the article is organized as follows: System model is discussed
in Section~\ref{sec:sys_model}. Capacity and performance analysis is described in Section~\ref{Perfom_analys}. 
Numerical evaluation of the closed form expressions is provided in Section~\ref{sec:ne}.  
Finally, we conclude in Section~\ref{sec:con}.

\section{System Model}
\label{sec:sys_model}

As shown in fig.~\ref{fig:sys_model} , we consider a single antenna at transmitter and single antenna at receiver (SISO system). We consider $h_l$ as channel gain from AP to reflecting element $l$ at RIS and $g_l$ as channel gain from element $l$ at RIS to user node. RIS consists of total $N$ reflecting elements. Phase shift induced by reflecting element $l$ is $\phi_l$. Channel vector for AP-RIS link is given by $\bm{h}=\left[h_1,\ldots,h_N\right]^T$. Similarly channel vector for RIS-user node link is given by $\bm{g}=\left[g_1,\ldots,g_N\right]^T$. Phase shift matrix $\bm{\phi} \triangleq \mbox{diag}\left(\phi_1,\ldots,\phi_N\right)\in \mathbb{C}^{N{\times}N}$. $P_t$ is transmitted power. Let $x$ is transmitted signal then received signal at user node is 
\begin{equation}
y =\sqrt{P_t}\bm{g}^T\bm{\phi}\bm{h}x+n, \label{eq:channelmodel}
\end{equation}
where $n$ is the additive white Gaussian noise (AWGN) $n\in\mathcal{CN}(0,N_0)$. Channel coefficient $h_l=\alpha_le^{\theta_l}$, where $\alpha_l$ is the amplitude and $\theta_l$ is the phase shift. Similarly, channel coefficient $g_l=\beta_le^{\theta'_l}$, where $\beta_l$ is the amplitude and $\theta'_l$ is the phase shift. Furthermore, we assume that the envelopes of first hop channels $\alpha_l$, where $l\in\{1,\ldots,N\}$ are independent identically distributed (i.i.d.) Rician random variable with shape parameter $K_1$ and scale parameter $\Omega_1$. Similarly,  envelopes of second hop channels $\beta_l$, where $l\in\{1,\ldots,N\}$ are i.i.d. Rician random variables with shape parameter $K_2$ and scale parameter $\Omega_2$. We consider that the $\bm{h}$ and $\bm{g}$ are independent. Let $v_i^2$ denotes power in the LoS component and $2\sigma_i^2$ denotes the power in the non line of sight (NLoS) component. Shape parameter $K_i=\frac{v_i^2}{2\sigma_i^2}$ and scale parameter $\Omega_i=v_i^2+2\sigma_i^2$.
\begin{figure}[h]
\vspace{1em}
\begin{centering}
\includegraphics[scale=0.8]{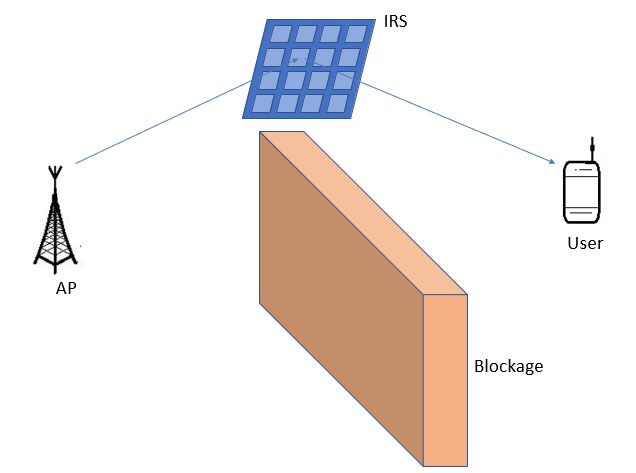}
\vspace{1.5em}
\caption{Communication aided by an RIS}
\label{fig:sys_model}
\end{centering}
\end{figure}

When transmitter transmits at average power $P_{av}$ then according to \cite{Basar2019}, maximum end-to-end SNR can be obtained as 
\begin{equation}\label{eq:snr}
\gamma =\overline{\gamma}\left(\sum_{l=1}^{N}\alpha_l\beta_l\right)^2,\\ 
\mbox{where }\overline{\gamma}=\frac{P_{av}}{N_0}. \nonumber
\end{equation}
Let $\xi=\sum_{l=1}^{N}\xi_l$ where $\xi_l=\alpha_l\beta_l$. Probability density function of $\xi_l$ can be obtained as  \cite[eq.~(24)]{Donoughue2012}
\begin{equation}\label{eq:pdf}
f_{\xi_l}\left(y\right) =\sum_{j=1}^{\infty}\sum_{i=1}^{\infty}\frac{y^{i+j+1}\left(4K_2^iK_1^j(\Omega_1\Omega_2)^{\frac{1}{2}(i+j+2)}\right)K_{j-i}(2y\sqrt{\Omega_1\Omega_2})}{(i!)^2(j!)^2\exp(K_1K_2)},\\ 
\end{equation}
where $K_{\nu}(\cdot)$ is the modified Bessel function of second kind and order $\nu$. Mean and variance of cascaded Rician random variable $\xi_l$ are evaluated as
\begin{align}\label{eq:mean}
\mathbbm{E}\left(\xi_l\right) &= \mathbbm{E}\left(\alpha_l\right)\mathbbm{E}\left(\beta_l\right) \mbox{ where } \nonumber \\
\mathbbm{E}\left(\alpha_l\right) &= \frac{1}{2}\sqrt{\frac{\Omega_1\pi}{K_1+1}}L_{1/2}(-K_1) \mbox{ and } \mathbbm{E}\left(\beta_l\right) = \frac{1}{2}\sqrt{\frac{\Omega_2\pi}{K_2+1}}L_{1/2}(-K_2). \nonumber \\
\end{align}
Where $L_{1/2}(\cdot)$ is Laguerre polynomial.  $L_{1/2}(x)=e^{x/2}\left[(1-x)I_0\left(\frac{-x}{2}\right)-xI_1\left(\frac{-x}{2}\right)\right]$,
where $I_{\nu}(\cdot)$ is the modified Bessel function of first kind and order $\nu$. 
\begin{align}
\mathbbm{E}\left(\xi_l\right)&=\frac{\pi e^{-\frac{(K_1+K_2)}{2}}}{4}\sqrt{\frac{\Omega_1\Omega_2}{(K_1+1)(K_2+1)}}\left[(K_1+1)I_0\left(\frac{K_1}{2}\right)+K_1I_1\left(\frac{K_1}{2}\right)\right] \nonumber \\
&\times \left[(K_2+1)I_0\left(\frac{K_2}{2}\right)+K_2I_1\left(\frac{K_2}{2}\right)\right],
\end{align}
\begin{align} \label{eq:var}
\mathbbm{E}\left(\xi_l^2\right)&=\mathbbm{E}\left(\alpha_l^2\right)\mathbbm{E}\left(\beta_l^2\right)=\Omega_1\Omega_2 \mbox{ and}\nonumber \\
\mathrm{Var}\left(\xi_l\right) &=\mathbbm{E}\left(\xi_l^2\right)-\left[\mathbbm{E}\left(\xi_l\right)\right]^2=\Omega_1\Omega_2-\left[\mathbbm{E}\left(\xi_l\right)\right]^2.
\end{align}

\begin{theorem}
PDF and cumulative density function of end-to-end SNR are given by 
\begin{equation}\label{eq:pdf_snr}
f_{\gamma}\left(\gamma\right) \simeq \frac{\gamma^{\frac{a-1}{2}}\exp\left(-\frac{\sqrt{\gamma}}{b\sqrt{\overline{\gamma}}}\right)}{2b^{a+1}\Gamma\left(a+1\right)\overline{\gamma}^{\frac{a+1}{2}}} \mbox{ and }
\end{equation}
\begin{equation}\label{eq:pdf_snr}
F_{\gamma}\left(\gamma\right) \simeq \frac{\gamma\left(a+1,\frac{\sqrt{\gamma}}{b\sqrt{\overline{\gamma}}}\right)}{\Gamma\left(a+1\right)} \mbox{ respectively}.
\end{equation}
Where $\gamma(\cdot,\cdot)$ is lower incomplete Gamma function given in \cite[eq.~8.350.1]{Gradshteyn2000}, 
$a=\frac{(\mathbbm{E}(\xi))^2}{\mathrm{Var}(\xi)}-1$ and $b=\frac{\mathrm{Var}(\xi)}{\mathbbm{E}(\xi)}$.
\end{theorem}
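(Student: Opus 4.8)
The plan is to approximate the aggregate cascaded gain $\xi=\sum_{l=1}^{N}\xi_l$ by a Gamma distribution and then propagate that approximation through the deterministic map $\gamma=\overline{\gamma}\,\xi^{2}$. Since the summands $\xi_l$ are i.i.d.\ positive random variables whose mean and variance are already computed in \eqref{eq:mean}--\eqref{eq:var}, the sum $\xi$ is a positively supported random variable whose law, following the Laguerre-series moment-matching rationale of \cite{Primak2004}, is well captured by a two-parameter Gamma density sharing its first two moments. First I would write this fitted density as
\begin{equation}
f_{\xi}(x)\simeq\frac{x^{a}\exp(-x/b)}{b^{a+1}\Gamma(a+1)},\qquad x>0,\nonumber
\end{equation}
a Gamma law of shape $a+1$ and scale $b$, and then verify the moment match: with $a+1=(\mathbbm{E}(\xi))^{2}/\mathrm{Var}(\xi)$ and $b=\mathrm{Var}(\xi)/\mathbbm{E}(\xi)$ one gets $(a+1)b=\mathbbm{E}(\xi)$ and $(a+1)b^{2}=\mathrm{Var}(\xi)$, so the fit reproduces exactly the mean and variance recorded in \eqref{eq:mean} and \eqref{eq:var}.

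Next I would obtain the PDF of $\gamma$ by the change of variables induced by $\gamma=\overline{\gamma}\,\xi^{2}$. Inverting gives $\xi=\sqrt{\gamma/\overline{\gamma}}$ with Jacobian $\left|d\xi/d\gamma\right|=1/(2\sqrt{\overline{\gamma}\,\gamma})$, whence
\begin{equation}
f_{\gamma}(\gamma)=f_{\xi}\!\left(\sqrt{\gamma/\overline{\gamma}}\right)\frac{1}{2\sqrt{\overline{\gamma}\,\gamma}}.\nonumber
\end{equation}
Substituting the fitted Gamma density and collecting the powers of $\gamma$ and $\overline{\gamma}$ yields the claimed PDF: the exponent $\tfrac{a-1}{2}$ arises from $\tfrac{a}{2}-\tfrac12$, while the normalizing factor $\overline{\gamma}^{(a+1)/2}$ comes from combining $\overline{\gamma}^{-a/2}$ (from $x^{a}$) with the $\overline{\gamma}^{-1/2}$ supplied by the Jacobian.

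For the CDF I would exploit that $\xi\mapsto\overline{\gamma}\,\xi^{2}$ is monotone on $[0,\infty)$, so $F_{\gamma}(\gamma)=\Pr(\xi\le\sqrt{\gamma/\overline{\gamma}})=F_{\xi}(\sqrt{\gamma/\overline{\gamma}})$. Inserting the Gamma CDF $F_{\xi}(x)=\gamma(a+1,x/b)/\Gamma(a+1)$, with the lower incomplete Gamma function of \cite[eq.~8.350.1]{Gradshteyn2000}, and evaluating at $x=\sqrt{\gamma/\overline{\gamma}}$ produces the second displayed expression. The main obstacle is not the algebra but justifying the Gamma approximation itself: the $\simeq$ hides the error of replacing the exact Bessel-type law of $\xi$ in \eqref{eq:pdf} by the two-moment Gamma fit. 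I would defend its accuracy through the moment-matching / Laguerre-series argument together with the tendency of sums of i.i.d.\ positive variates toward a Gamma shape, rather than attempt a sharp uniform error bound.
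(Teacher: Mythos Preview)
Your proposal is correct and follows the same route as the paper: approximate $\xi$ by the first-term Laguerre (Gamma) fit from \cite{Primak2004}, then push it through $\gamma=\overline{\gamma}\,\xi^{2}$. You simply make explicit the moment-match verification, the Jacobian of the change of variables, and the monotonicity argument for the CDF, all of which the paper leaves to the reader.
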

\begin{proof}
As $\xi$ is the sum of i.i.d random variables therefore the central limit theorem can estimate the PDF of $\xi$ using first term of Laguerre expansion given in \cite[Sec. 2.2.2]{Primak2004} as 
\begin{equation}\label{eq:pdf_si}
f_{\xi}\left(y\right) \simeq \frac{y^a}{b^{a+1}\Gamma\left(a+1\right)}\exp\left(-\frac{y}{b}\right).
\end{equation}
$\mathbbm{E}(\xi)=N\mathbbm{E}(\xi_l)$ and $\mathrm{Var}(\xi)=N \mathrm{Var}(\xi_l)$. Use relation $\gamma=\overline{\gamma}\xi^2$.
\end{proof}

\section{Performance and Capacity Analysis}\label{Perfom_analys}
\subsection{Average Symbol Error Probability}
For any modulation scheme, the performance is expressed in the form of average symbol error probability (ASEP). Expression for ASEP using CDF is given in \cite{McKay2007} as 
\begin{equation}\label{eq:err_prob}
ASEP=\frac{p\sqrt{q}}{2\sqrt{\pi}}\int_{0}^{\infty}\frac{\exp(-q\gamma)}{\sqrt{\gamma}}F_{\gamma}\left(\gamma\right)d\gamma,
\end{equation}
where $p$ and $q$ are modulation specific parameters.
\begin{theorem}
For RIS aided communication ASEP is given by
\begin{equation}\label{eq:err_probclos}
ASEP=\frac{2^{(a-1)}p}{\pi\Gamma(a+1)} G_{3,4}^{2,3} \left(\frac{1}{4q\overline{\gamma}b^2}\bigg|_{\frac{a+1}{2},\frac{a+2}{2},0,\frac{1}{2}}^{\frac{1}{2},\frac{1}{2},1}\right),
\end{equation}
where $G_{p,q}^{m,n}(\cdot)$ is Meijer's G function as given in \cite[Sec. 9.3]{Gradshteyn2000}. 
\end{theorem}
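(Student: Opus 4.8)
The plan is to start from the ASEP integral \eqref{eq:err_prob}, insert the CDF of Theorem~1, and collapse the resulting single integral into one Meijer's $G$ function by a Mellin--Barnes argument. Substituting $F_{\gamma}$ into \eqref{eq:err_prob} gives
\[
ASEP=\frac{p\sqrt{q}}{2\sqrt{\pi}\,\Gamma(a+1)}\int_{0}^{\infty}\gamma^{-1/2}e^{-q\gamma}\,\gamma\!\left(a+1,\tfrac{\sqrt{\gamma}}{b\sqrt{\overline{\gamma}}}\right)d\gamma .
\]
The only awkward feature is the square root inside the lower incomplete Gamma function, so I would first substitute $t=\sqrt{\gamma}$, under which $\gamma^{-1/2}d\gamma=2\,dt$; this linearizes the argument and turns the exponential into a Gaussian weight, yielding
\[
ASEP=\frac{p\sqrt{q}}{\sqrt{\pi}\,\Gamma(a+1)}\int_{0}^{\infty}e^{-qt^{2}}\,\gamma\!\left(a+1,\tfrac{t}{b\sqrt{\overline{\gamma}}}\right)dt .
\]

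Next I would write the lower incomplete Gamma function as $\gamma(a+1,z)=G_{1,2}^{1,1}\!\left(z\big|_{a+1,0}^{1}\right)$ and replace it by its Mellin--Barnes contour integral, whose integrand is $\Gamma(a+1-s)\Gamma(s)/\Gamma(1+s)\cdot z^{s}$ with $z=t/(b\sqrt{\overline{\gamma}})$. Interchanging the contour integral with the $t$-integral (justified by absolute convergence along a suitable vertical line) isolates the elementary Gaussian moment $\int_{0}^{\infty}t^{s}e^{-qt^{2}}\,dt=\tfrac12 q^{-(s+1)/2}\Gamma\!\big(\tfrac{s+1}{2}\big)$. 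This is the step that performs the quadratic-to-linear rescaling: a half-argument factor $\Gamma\!\big(\tfrac{s+1}{2}\big)$ enters the integrand and the power of the Gaussian scale combines with $(b\sqrt{\overline{\gamma}})^{-s}$ so that, after the change of variable $s=2w$, the running argument becomes $1/(4q\overline{\gamma}b^{2})$.

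Then I would apply the Legendre duplication formula to $\Gamma(a+1-2w)$ and to $\Gamma(2w)/\Gamma(1+2w)$, which respectively produce the pair $\{\tfrac{a+1}{2},\tfrac{a+2}{2}\}$ in the ``numerator'' $\Gamma$-slots of the contour integrand and reduce $\Gamma(2w)/\Gamma(1+2w)$ to $1/(2w)=\tfrac12\,\Gamma(w)/\Gamma(1+w)$. Collecting the surviving factors yields the integrand $\Gamma(\tfrac{a+1}{2}-w)\Gamma(\tfrac{a+2}{2}-w)\Gamma(w+\tfrac12)\Gamma(w)/\Gamma(1+w)\cdot z^{w}$, which I would read off as $G_{3,4}^{2,3}$ with upper parameters $\tfrac12,\tfrac12,1$ and lower parameters $\tfrac{a+1}{2},\tfrac{a+2}{2},0,\tfrac12$ exactly as in \eqref{eq:err_probclos} (the redundant pair $\tfrac12$ in the two lists cancels, confirming the representation). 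The powers of $2$ and the $\sqrt{\pi}$ thrown off by the duplication formula combine with the $q^{-1/2}$ from the Gaussian moment and the prefactor $p\sqrt{q}/(\sqrt{\pi}\,\Gamma(a+1))$ to give precisely the stated constant $2^{a-1}p/(\pi\Gamma(a+1))$.

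The main obstacle is the bookkeeping in this last step: the duplication formula must be applied so that the half-integer shifts land in the correct $m,n$ groups (here $m=2,\ n=3$), and the accumulated constants (the factors of $2$, the $q^{-1/2}$, and the two $\pi^{-1/2}$ factors) must be tracked exactly to recover the prefactor rather than merely its $G$-function part. Equivalently, one may bypass the explicit contour manipulation by invoking the tabulated product-integral reduction for two Meijer's $G$ functions whose arguments scale as $t$ and $t^{2}$ (the Adamchik--Marichev reduction underlying \cite[Sec.~9.3]{Gradshteyn2000}), in which the factor-$2$ multiplication theorem automatically generates the half-integer parameters; either route requires checking the convergence conditions on $a,q,\overline{\gamma},b$ so that the contour can be closed and the interchange of integrals is legitimate.
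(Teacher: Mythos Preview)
Your argument is correct and lands on exactly the formula claimed. The paper's own proof follows the same skeleton---substitute $F_{\gamma}$ into \eqref{eq:err_prob}, replace $\gamma(a+1,\cdot)$ by $G_{1,2}^{1,1}$, and collapse the integral to a single Meijer $G$---but it does not perform your substitution $t=\sqrt{\gamma}$ or any explicit Mellin--Barnes manipulation: instead it keeps $\gamma$ as the integration variable and invokes the tabulated identity \cite[Sec.~2.24.3.1]{Prudnikov1998} for $\int_{0}^{\infty}x^{\alpha-1}e^{-\sigma x}G_{p,q}^{m,n}(\omega x^{l/k}|\cdots)\,dx$ with $l/k=1/2$, which outputs the half-integer parameter doubling automatically. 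Your route is therefore an explicit derivation of that table entry in this special case (Gaussian moment, $s\mapsto 2w$, Legendre duplication), which has the advantage of being self-contained and of making the convergence requirements visible, at the cost of the bookkeeping you flag; the paper's route is a one-line lookup. You even note this equivalence in your final paragraph, so nothing is missing.
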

\begin{proof}
Eq. \eqref{eq:err_prob} can be written as 
\begin{equation}\label{eq:err_prob1}
ASEP=\frac{p\sqrt{q}}{2\sqrt{\pi}\Gamma\left(a+1\right)}\int_{0}^{\infty}\frac{\exp(-q\gamma)}{\sqrt{\gamma}}\gamma\left(a+1,\frac{\sqrt{\gamma}}{b\sqrt{\overline{\gamma}}}\right)d\gamma.
\end{equation}
As given in \cite[Sec. 8.4.16]{Prudnikov1998}, the lower incomplete Gamma function can be expressed as $\gamma(v,x)=G_{1,2}^{1,1}\left( x\big|_{v,0}^{1}\right)$. Therefore,
\begin{equation}\label{eq:err_prob2}
ASEP=\frac{p\sqrt{q}}{2\sqrt{\pi}\Gamma\left(a+1\right)}\int_{0}^{\infty}\frac{\exp(-q\gamma)}{\sqrt{\gamma}}G_{1,2}^{1,1}\left( \frac{\sqrt{\gamma}}{b\sqrt{\overline{\gamma}}}\bigg|_{a+1,0}^{1}\right)d\gamma.
\end{equation}
Using identity given in \cite[Sec. 2.24.3.1]{Prudnikov1998}  as below
\begin{eqnarray} \label{eq:identity}
\begin{split}
& \int_{0}^{\infty}x^{\alpha-1}e^{-\sigma x}G_{p,q}^{m,n}\left( \omega x^{l/k}\bigg|_{b_1,\ldots,b_m,\ldots,b_q}^{a_1,\ldots,a_n,\ldots,a_p}\right)dx=\frac{k^{\mu}l^{\alpha-1}\sigma^{-\alpha}}{(2\pi)^{\frac{l-1}{2}+c^*(k-1)}} \\
& G_{kp+l,kq}^{km,kn+l}\left( \frac{\omega^k l^l}{\sigma^l k^{k(q-p)}}\bigg|_{\frac{b_1}{k},\ldots, \frac{k+b_1-1}{k},\ldots,\frac{b_m}{k},\ldots, \frac{k+b_m-1}{k},\ldots,\frac{b_q}{k},\ldots, \frac{k+b_q-1}{k}}^{\frac{1-\alpha}{l},\ldots,\frac{l-\alpha}{l},\frac{a_1}{k},\ldots, \frac{k+a_1-1}{k},\ldots,\frac{a_n}{k},\ldots, \frac{k+a_n-1}{k},\ldots,\frac{a_p}{k},\ldots, \frac{k+a_p-1}{k}}\right),\\
&\mbox{ where } \mu=\sum_{j=1}^q  b_j-\sum_{j=1}^pa_j+\frac{p-q}{2}+1 \mbox{ and } c^*=m+n-\frac{p+q}{2}.
\end{split}
\end{eqnarray}
Finally we get,
\begin{equation}
ASEP=\frac{2^{(a-1)}p}{\pi\Gamma(a+1)} G_{3,4}^{2,3} \left(\frac{1}{4q\overline{\gamma}b^2}\bigg|_{\frac{a+1}{2},\frac{a+2}{2},0,\frac{1}{2}}^{\frac{1}{2},\frac{1}{2},1}\right). \nonumber
\end{equation}
\end{proof}
Derived expression for ASEP is very compact compared to \cite{Salhab2021} and easy to implement as Meijer's G function is available in MATLAB.
\subsection{Capacity Analysis}
When AP transmits at average power then capacity can be expresses in terms of PDF of $\gamma$ as 
\begin{equation}\label{eq:erg_cap}
\langle \hat{C} \rangle=\frac{1}{\ln2}\int_{0}^{\infty}\ln(1+\gamma)f_{\gamma}\left(\gamma\right)d\gamma.
\end{equation}
\begin{theorem}\label{thm:no_csi}
For RIS aided communication ergodic capacity without CSI is given by
\begin{equation}\label{eq:nocsi_cap}
\langle \hat{C} \rangle=\frac{2^{a}}{\ln2\sqrt{\pi}\Gamma(a+1)} G_{4,2}^{1,4} \left(4\overline{\gamma}b^2\bigg|_{1,0}^{\frac{-a}{2},\frac{-a+1}{2},1,1}\right).
\end{equation} 
\end{theorem}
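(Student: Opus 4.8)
The plan is to evaluate the capacity integral \eqref{eq:erg_cap} directly, inserting the SNR density $f_{\gamma}$ established in the first theorem and then collapsing the integral to a single Meijer $G$-function, in close parallel with the ASEP derivation. First I would substitute $f_{\gamma}$ into \eqref{eq:erg_cap} and pull the $\gamma$-independent constants outside, so that
\[
\langle \hat{C}\rangle=\frac{1}{2\ln 2\,b^{a+1}\Gamma(a+1)\,\overline{\gamma}^{\frac{a+1}{2}}}\,J,\qquad J=\int_{0}^{\infty}\ln(1+\gamma)\,\gamma^{\frac{a-1}{2}}\exp\!\left(-\frac{\sqrt{\gamma}}{b\sqrt{\overline{\gamma}}}\right)d\gamma .
\]
All the analytic work is then concentrated in the single integral $J$.

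Next I would prepare $J$ for the master identity \eqref{eq:identity}, which requires the exponential factor to be linear in the integration variable. Since the exponent here carries $\sqrt{\gamma}$, I would substitute $t=\sqrt{\gamma}$, $d\gamma=2t\,dt$, giving $J=2\int_{0}^{\infty}t^{a}\,e^{-t/(b\sqrt{\overline{\gamma}})}\,\ln(1+t^{2})\,dt$. The logarithm is then replaced by its Meijer representation $\ln(1+x)=G_{2,2}^{1,2}\!\left(x\big|_{1,0}^{1,1}\right)$ evaluated at $x=t^{2}$. The integrand now matches the left-hand side of \eqref{eq:identity} with $\alpha=a+1$, $\sigma=1/(b\sqrt{\overline{\gamma}})$, $\omega=1$, and the fractional power $l/k=2$ realized by the choice $l=2$, $k=1$.

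Applying \eqref{eq:identity} with $(p,q,m,n)=(2,2,1,2)$ and these parameters is the core step. The order transformation $G_{kp+l,kq}^{km,kn+l}$ collapses to $G_{4,2}^{1,4}$; the argument $\omega^{k}l^{l}/(\sigma^{l}k^{k(q-p)})$ reduces to $4\overline{\gamma}b^{2}$; the lower list becomes $\{1,0\}$ (since $k=1$ leaves $b_{1},b_{2}$ unshifted); and the upper list is assembled from the exponent-shift terms $\tfrac{1-\alpha}{l},\tfrac{2-\alpha}{l}=\tfrac{-a}{2},\tfrac{-a+1}{2}$ together with $\{a_{1},a_{2}\}=\{1,1\}$, producing exactly $\{\tfrac{-a}{2},\tfrac{-a+1}{2},1,1\}$. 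This reproduces the $G$-function appearing in the statement.

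I expect the main obstacle to be the constant bookkeeping rather than the parameter transformation. One must collect $k^{\mu}=1$, $l^{\alpha-1}=2^{a}$, $\sigma^{-\alpha}=b^{a+1}\overline{\gamma}^{(a+1)/2}$, divide by the denominator $(2\pi)^{(l-1)/2+c^{*}(k-1)}$, multiply by the Jacobian factor $2$, and cancel against the normalisation $1/(2\ln2\,b^{a+1}\Gamma(a+1)\overline{\gamma}^{(a+1)/2})$. The delicate point is that, unlike the ASEP case where $k=2$ generated a compensating $k^{\mu}=2^{a+1/2}$, here $k=1$ gives $k^{\mu}=1$, so the $(l-1)/2=\tfrac12$ power of $2\pi$ and the various factors of $2$ must be reconciled with particular care to land on the compact prefactor $2^{a}/(\ln2\,\sqrt{\pi}\,\Gamma(a+1))$ quoted in \eqref{eq:nocsi_cap}.
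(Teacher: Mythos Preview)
Your proposal is correct and follows essentially the same route as the paper: substitute $f_\gamma$, change variables $\gamma=t^2$ to linearize the exponent, write $\ln(1+t^2)=G_{2,2}^{1,2}\!\left(t^2\big|_{1,0}^{1,1}\right)$, and then apply the master identity \eqref{eq:identity} with $l=2$, $k=1$ to obtain the $G_{4,2}^{1,4}$ expression. In fact you spell out the parameter matching and constant bookkeeping in more detail than the paper's own proof, which simply states that \eqref{eq:identity} yields the result.
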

\begin{proof}
As given in \cite[Sec. 8.4.6]{Prudnikov1998}, $\ln(1+x)$ can be expressed as $\ln(1+x)=G_{2,2}^{1,2}\left( x\big|_{1,0}^{1,1}\right)$. Therefore eq.~\eqref{eq:erg_cap} can be written as
\begin{equation}
\langle \hat{C} \rangle=\frac{1}{2\ln2(b^{a+1})\Gamma\left(a+1\right)\overline{\gamma}^{\frac{a+1}{2}}}\int_{0}^{\infty}\gamma^{\frac{a-1}{2}}\exp\left(-\frac{\sqrt{\gamma}}{b\sqrt{\overline{\gamma}}}\right)G_{2,2}^{1,2}\left( \gamma\big|_{1,0}^{1,1}\right)d\gamma.  \nonumber
\end{equation}
Changing the variable of integration as $\gamma=t^2$, we get
\begin{equation}
\langle \hat{C} \rangle=\frac{1}{\ln2(b^{a+1})\Gamma\left(a+1\right)\overline{\gamma}^{\frac{a+1}{2}}}\int_{0}^{\infty}t^a\exp\left(-\frac{t}{b\sqrt{\overline{\gamma}}}\right)G_{2,2}^{1,2}\left( t^2\big|_{1,0}^{1,1}\right)dt.  \nonumber
\end{equation}

Using the identity given in eq.~\eqref{eq:identity}, we get the desired expression.
\end{proof}
For given CSI, optimal power for transmission is given by water filling \cite{Goldsmith1997} as
\begin{equation}
  \frac{P_t(\gamma)}{P_{av}}= \begin{cases}
  \frac{1}{\gamma_{0}}-\frac{1}{\gamma}  & \mbox{if } \,\, \gamma>\gamma_{0} \\
  0,  & \mbox{otherwise}   ,
  \end{cases} \label{eq:update}\\
\end{equation}
where $\gamma_0$ is cutoff water level. $\gamma_0$ satisfies average power constraint $\int_{\gamma_0}^{\infty}\left(\frac{1}{\gamma_{0}}-\frac{1}{\gamma}\right)f_{\gamma}\left(\gamma\right)d\gamma=1$.
\begin{theorem}\label{thm:csi}
For RIS aided communication ergodic capacity using optimal power control for known CSI is given by
\begin{equation}\label{eq:csi_cap}
\langle C \rangle=\frac{2^{a}}{\ln2\sqrt{\pi}\Gamma(a+1)} G_{4,2}^{0,4} \left(\frac{4b^2}{\gamma_0}\bigg|_{0,0}^{\frac{-a}{2},\frac{-a+1}{2},1,1}\right),
\end{equation} 
\end{theorem}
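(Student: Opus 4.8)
The plan is to mirror the derivation of Theorem~\ref{thm:no_csi}, the only essential change being that the ergodic integrand $\ln(1+\gamma)$ is replaced by the water-filling gain $\ln(\gamma/\gamma_0)$, and that the cutoff $\gamma_0$ of \eqref{eq:update} must be carried inside a Meijer's G factor instead of being left in the limits of integration. First I would write the capacity delivered by the optimal allocation \eqref{eq:update} in the standard water-filling form
\[
\langle C \rangle=\frac{1}{\ln2}\int_{\gamma_0}^{\infty}\ln\!\left(\frac{\gamma}{\gamma_0}\right)f_{\gamma}(\gamma)\,d\gamma,
\]
which follows because $1+\gamma\bigl(\tfrac{1}{\gamma_0}-\tfrac{1}{\gamma}\bigr)=\gamma/\gamma_0$ on the active set $\gamma>\gamma_0$. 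Inserting the SNR density \eqref{eq:pdf_snr} and applying the same change of variable $\gamma=t^{2}$ used in Theorem~\ref{thm:no_csi} replaces $\gamma^{(a-1)/2}\,d\gamma$ by $2t^{a}\,dt$ and moves the lower limit to $t=\sqrt{\gamma_0}$.

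The decisive new ingredient is a Meijer's G kernel that simultaneously encodes the logarithm and the truncation. I would use
\[
G_{2,2}^{2,0}\!\left(\frac{\gamma_0}{t^{2}}\bigg|_{0,0}^{1,1}\right)=
\begin{cases}
\ln\!\left(\dfrac{t^{2}}{\gamma_0}\right), & t^{2}>\gamma_0,\\[3pt]
0, & 0<t^{2}<\gamma_0,
\end{cases}
\]
which follows from its Mellin--Barnes representation: the ratio $\Gamma(-s)^{2}/\Gamma(1-s)^{2}$ simplifies to $s^{-2}$, leaving a single double pole at $s=0$; summing its residue with the orientation appropriate to $t^{2}>\gamma_0$ yields $\ln(t^{2}/\gamma_0)$, while for $t^{2}<\gamma_0$ the absence of left poles forces the kernel to vanish. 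Because this factor is identically zero below the cutoff, the lower limit may be pushed back to $0$, so that $\gamma_0$ survives only inside the argument of the G function and no longer in the range of integration.

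Since identity~\eqref{eq:identity} demands that the G factor carry a positive power of the integration variable, I would then invoke the reflection rule $G_{p,q}^{m,n}(z\mid\cdots)=G_{q,p}^{n,m}(1/z\mid\cdots)$ to recast the kernel as $G_{2,2}^{0,2}\!\left(t^{2}/\gamma_0\big|_{0,0}^{1,1}\right)$, giving
\[
\langle C \rangle=\frac{1}{\ln2\,b^{a+1}\Gamma(a+1)\,\overline{\gamma}^{(a+1)/2}}\int_{0}^{\infty}t^{a}\exp\!\left(-\frac{t}{b\sqrt{\overline{\gamma}}}\right)G_{2,2}^{0,2}\!\left(\frac{t^{2}}{\gamma_0}\bigg|_{0,0}^{1,1}\right)dt.
\]
Feeding this into \eqref{eq:identity} with $\alpha=a+1$, $\sigma=1/(b\sqrt{\overline{\gamma}})$, $\omega=1/\gamma_0$ and $(k,l)=(1,2)$ promotes $(m,n,p,q)=(0,2,2,2)$ to $G_{4,2}^{0,4}$; the upper parameters become $\tfrac{-a}{2},\tfrac{-a+1}{2},1,1$, the lower parameters become $0,0$, and the argument is inversely proportional to $\gamma_0$, reproducing \eqref{eq:csi_cap}.

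I expect the second step to be the main obstacle: one must identify the correct truncated-logarithm kernel and verify that its support is exactly $t>\sqrt{\gamma_0}$, since it is precisely this vanishing below the cutoff that legitimizes extending the integral to $[0,\infty)$ and thereby lets the water-filling threshold migrate from the integration limit into the argument of the final G function. By contrast, the reflection step and the parameter substitutions in \eqref{eq:identity} are routine bookkeeping that proceed exactly as in Theorem~\ref{thm:no_csi} once this kernel is in place.
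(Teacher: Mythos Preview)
Your proposal is correct and follows essentially the same route as the paper: the paper simply cites the Prudnikov identity $\ln(x)H(x-1)=G_{2,2}^{0,2}\!\left(x\big|_{0,0}^{1,1}\right)$ and then says ``follow the proof of Theorem~\ref{thm:no_csi}'', which is exactly what you do after reaching the $G_{2,2}^{0,2}$ kernel via the reflection rule. Your Mellin--Barnes justification of that truncated-logarithm identity is more detailed than the paper's bare citation, but the underlying argument and the final application of \eqref{eq:identity} are identical.
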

\begin{proof}
As given in \cite{Goldsmith1997}, capacity using water filling algorithm is given by $\int_{\gamma_0}^{\infty}\ln\left(\frac{\gamma}{\gamma_0}\right)f_{\gamma}\left(\gamma\right)d\gamma$. Following \cite[Sec. 8.4.6]{Prudnikov1998}, $\ln(x)H(x-1)$ can be expressed as $\ln(x)H(x-1)=G_{2,2}^{0,2}\left( x\big|_{0,0}^{1,1}\right)$, where $H(x-1)$ is the unit step function. Following the proof of theorem~\ref{thm:no_csi}, we can obtain the similar expression.
\end{proof}
To the best of our knowledge, expressions \eqref{eq:err_probclos}, \eqref{eq:nocsi_cap}, \eqref{eq:csi_cap} are derived in compact form for the first time.

\section{Numerical Evaluations}
\label{sec:ne}
This section discusses the simulation results to show the performance of power control in RIS aided communication. In Fig. \ref{fig:asep}, we reproduce the results of \cite{Salhab2021} using a new closed form expression. We fix $K_1=K_2=1$, $N=5$ and vary $\Omega_1$ and $\Omega_2$ by changing $\sigma_1^2$ and $\sigma_2^2$. This figure shows that with increase in $\Omega$, coding gain increases and improves the performance of the system.

In our simulations for Fig. \ref{fig:element_2}, we consider $\sigma_1^2=\sigma_2^2=1/2$ and $N=2$. We evaluate ergodic capacity using power control for known CSI and capacity by transmitting at average power when CSI is not available. Simulation results show that when RIS consists of a very small number of elements then the water filling algorithm improves the capacity in a low SNR regime. Fig. \ref{fig:element_2} also shows that if the value of $K_1=K_2=K$ is less, it means channel is more random therefore power control helps in increasing the data rate but for higher values of $K$, transmitting at average power performs as good as water filling algorithm. This shows that using RIS for LoS communication does not need power control because the channel is less random in LoS communication.

\begin{figure}[h]
\vspace{1em}
\begin{centering}
\includegraphics[scale=1]{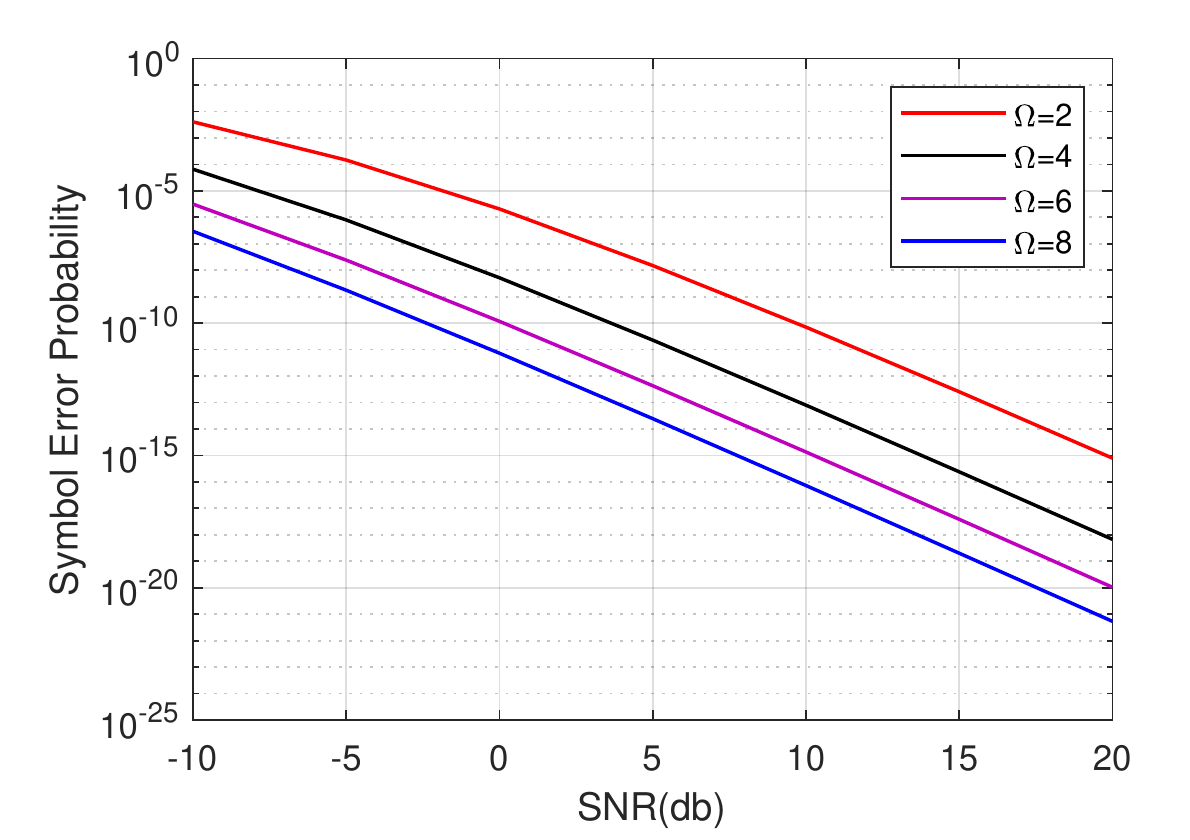}
\vspace{1.5em}
\caption{ASEP Vs SNR for different values of $K_1=K_2=1$ with $N=5$.}
\label{fig:asep}
\end{centering}
\end{figure}

\begin{figure}[h]
\vspace{1em}
\begin{centering}
\includegraphics[scale=1]{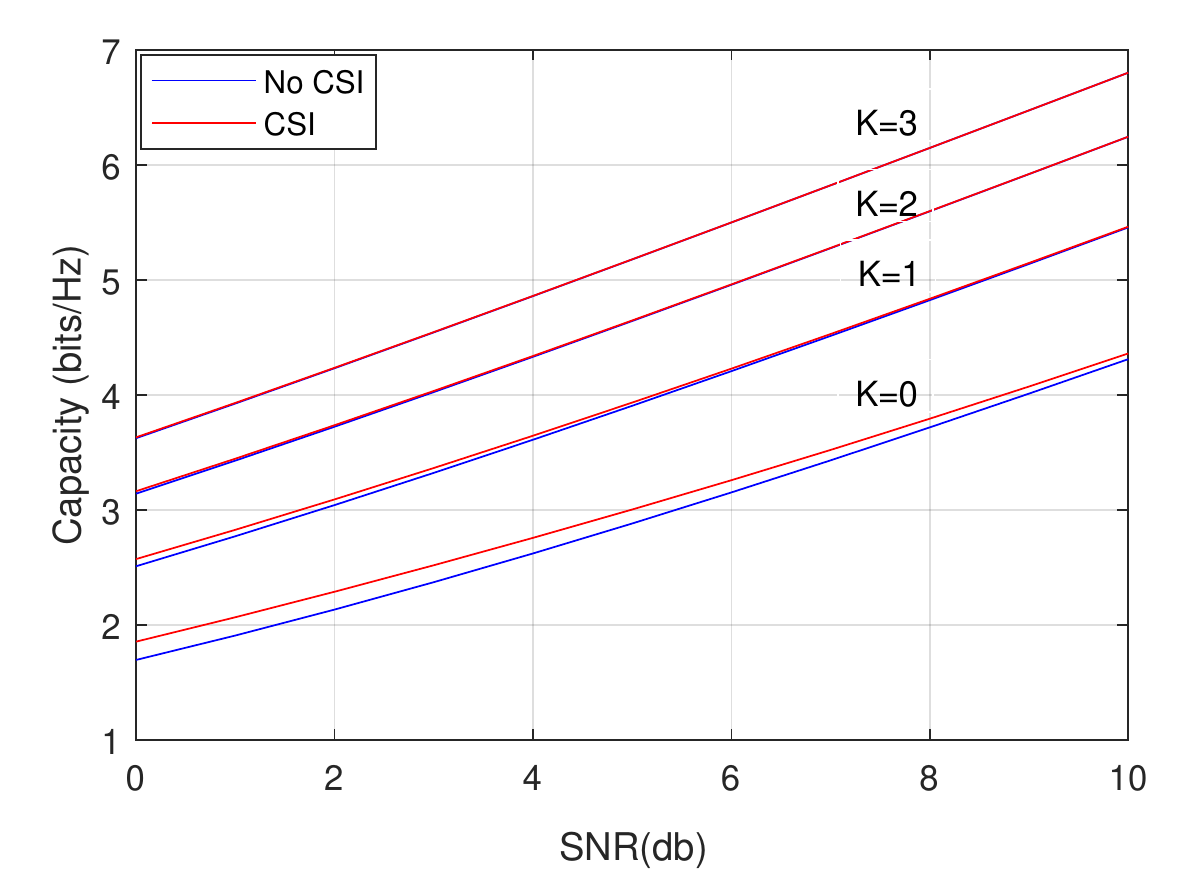}
\vspace{1.5em}
\caption{Capacity Vs SNR for different values of $K_1=K_2=K$ with $N=2$.}
\label{fig:element_2}
\end{centering}
\end{figure}
\indent\indent In Fig. \ref{fig:element_5}, we plot capacity vs SNR for different values of $K$ with $N=5$. Simulations results shows that, even for low values of Rician factor $K$ when channel is more random, increasing the elements on RIS compensate for randomness and therefore transmission at average power performs as good as the water filling algorithm. This shows that RIS with high number of reflecting elements does not need power control to improve capacity. Transmitter can transmit at average power if the number of elements is very large and it also reduces the overhead required for channel estimation.

\begin{figure}[h]
\vspace{1em}
\begin{centering}
\includegraphics[scale=1]{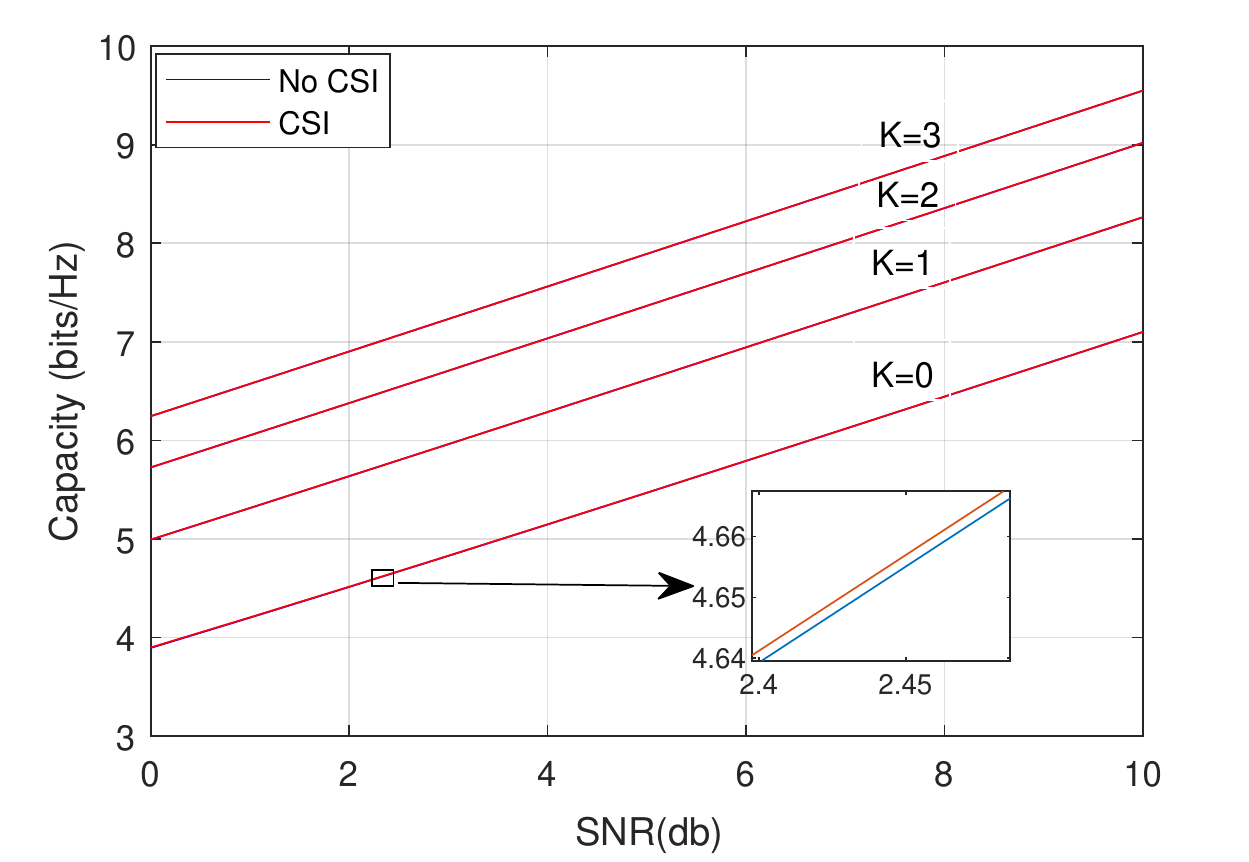}
\vspace{1.5em}
\caption{Capacity Vs SNR for different values of $K_1=K_2=K$ with $N=5$.}
\label{fig:element_5}
\end{centering}
\end{figure}

\section{Conclusion}
\label{sec:con}
In this work, we have investigated that LoS communication aided by an RIS generally transmits over a less random Rician fading channel. A transmitter does not need CSI to adapt power if RIS consists of a large number of elements in LoS communication. CSI is needed at RIS for passive beamforming. Adaptive power control with CSI at the transmitter can improve the performance of a system in a low SNR regime in a more random environment with a low value of Rician factor $K$.

\section{Acknowledgements}
\label{sec:ack}
Author also thanks the support from "NCKU 90 and Beyond" initiative at National Cheng Kung University, Taiwan.

\balance


\begin{thebibliography}{1}
\bibitem{Wu2020} Q. Wu and R. Zhang, "Towards Smart and Reconfigurable Environment: Intelligent Reflecting Surface Aided Wireless Network," in IEEE Communications Magazine, vol. 58, no. 1, pp. 106-112, January 2020, doi: 10.1109/MCOM.001.1900107. 

\bibitem{Basar2019} E. Basar, M. Di Renzo, J. De Rosny, M. Debbah, M. -S. Alouini and R. Zhang, "Wireless Communications Through Reconfigurable Intelligent Surfaces," in IEEE Access, vol. 7, pp. 116753-116773, 2019, doi: 10.1109/ACCESS.2019.2935192.

\bibitem{Wu2019} Q. Wu and R. Zhang, "Intelligent Reflecting Surface Enhanced Wireless Network via Joint Active and Passive Beamforming," in IEEE Transactions on Wireless Communications, vol. 18, no. 11, pp. 5394-5409, Nov. 2019, doi: 10.1109/TWC.2019.2936025.

\bibitem{Boulogeorgos2020} A. A. Boulogeorgos and A. Alexiou, "Performance Analysis of Reconfigurable Intelligent Surface-Assisted Wireless Systems and Comparison With Relaying," in IEEE Access, vol. 8, pp. 94463-94483, 2020, doi: 10.1109/ACCESS.2020.2995435.

\bibitem{Wu_2020} Q. Wu and R. Zhang, "Beamforming Optimization for Wireless Network Aided by Intelligent Reflecting Surface With Discrete Phase Shifts," in IEEE Transactions on Communications, vol. 68, no. 3, pp. 1838-1851, March 2020, doi: 10.1109/TCOMM.2019.2958916.

\bibitem{Huang2019} C. Huang, A. Zappone, G. C. Alexandropoulos, M. Debbah and C. Yuen, "Reconfigurable Intelligent Surfaces for Energy Efficiency in Wireless Communication," in IEEE Transactions on Wireless Communications, vol. 18, no. 8, pp. 4157-4170, Aug. 2019, doi: 10.1109/TWC.2019.2922609.

\bibitem{Hu2020} X. Hu, C. Zhong, Y. Zhu, X. Chen and Z. Zhang, "Programmable Metasurface-Based Multicast Systems: Design and Analysis," in IEEE Journal on Selected Areas in Communications, vol. 38, no. 8, pp. 1763-1776, Aug. 2020, doi: 10.1109/JSAC.2020.3000809.

\bibitem{Zhou2020} G. Zhou, C. Pan, H. Ren, K. Wang and A. Nallanathan, "Intelligent Reflecting Surface Aided Multigroup Multicast MISO Communication Systems," in IEEE Transactions on Signal Processing, vol. 68, pp. 3236-3251, 2020, doi: 10.1109/TSP.2020.2990098.

\bibitem{Tao2020} Q. Tao, J. Wang and C. Zhong, "Performance Analysis of Intelligent Reflecting Surface Aided Communication Systems," in IEEE Communications Letters, vol. 24, no. 11, pp. 2464-2468, Nov. 2020, doi: 10.1109/LCOMM.2020.3011843.


\bibitem{Yildirim2021} I. Yildirim, A. Uyrus and E. Basar, "Modeling and Analysis of Reconfigurable Intelligent Surfaces for Indoor and Outdoor Applications in Future Wireless Networks," in IEEE Transactions on Communications, vol. 69, no. 2, pp. 1290-1301, Feb. 2021, doi: 10.1109/TCOMM.2020.3035391.

\bibitem{Jia2020} Y. Jia, C. Ye and Y. Cui, "Analysis and Optimization of an Intelligent Reflecting Surface-Assisted System With Interference," in IEEE Transactions on Wireless Communications, vol. 19, no. 12, pp. 8068-8082, Dec. 2020, doi: 10.1109/TWC.2020.3019088.

\bibitem{Guo2020} C. Guo, Y. Cui, F. Yang and L. Ding, "Outage Probability Analysis and Minimization in Intelligent Reflecting Surface-Assisted MISO Systems," in IEEE Communications Letters, vol. 24, no. 7, pp. 1563-1567, July 2020, doi: 10.1109/LCOMM.2020.2975182.

\bibitem{Kong2021} L. Kong, Y. Ai, S. Chatzinotas and B. Ottersten, "Effective Rate Evaluation of RIS-Assisted Communications Using the Sums of Cascaded $\alpha-\mu$ Random Variates," in IEEE Access, vol. 9, pp. 5832-5844, 2021, doi: 10.1109/ACCESS.2020.3048882.

\bibitem{Han2019} Y. Han, W. Tang, S. Jin, C. -K. Wen and X. Ma, "Large Intelligent Surface-Assisted Wireless Communication Exploiting Statistical CSI," in IEEE Transactions on Vehicular Technology, vol. 68, no. 8, pp. 8238-8242, Aug. 2019, doi: 10.1109/TVT.2019.2923997.

\bibitem{Kudathanthirige2020} D. Kudathanthirige, D. Gunasinghe and G. Amarasuriya, "Performance Analysis of Intelligent Reflective Surfaces for Wireless Communication," ICC 2020 - 2020 IEEE International Conference on Communications (ICC), 2020, pp. 1-6, doi: 10.1109/ICC40277.2020.9148760.

\bibitem{Yang2020} L. Yang, F. Meng, Q. Wu, D. B. da Costa and M. -S. Alouini, "Accurate Closed-Form Approximations to Channel Distributions of RIS-Aided Wireless Systems," in IEEE Wireless Communications Letters, vol. 9, no. 11, pp. 1985-1989, Nov. 2020, doi: 10.1109/LWC.2020.3010512.


\bibitem{Salhab2021} A. M. Salhab and M. H. Samuh, "Accurate Performance Analysis of Reconfigurable Intelligent Surfaces Over Rician Fading Channels," in IEEE Wireless Communications Letters, vol. 10, no. 5, pp. 1051-1055, May 2021, doi: 10.1109/LWC.2021.3056758.

\bibitem{Primak2004} S. Primak, V. Kontorovich, and V. Lyandres, Stochastic Methods and their Applications to Communications: Stochastic Differential Equations Approach. West Sussex, U.K.: Wiley, 2004.

\bibitem{Goldsmith1997} A. J. Goldsmith and P. P. Varaiya, "Capacity of fading channels with channel side information," in IEEE Transactions on Information Theory, vol. 43, no. 6, pp. 1986-1992, Nov. 1997, doi: 10.1109/18.641562.

\bibitem{Donoughue2012} N. O'Donoughue and J. M. F. Moura, "On the Product of Independent Complex Gaussians," in IEEE Transactions on Signal Processing, vol. 60, no. 3, pp. 1050-1063, March 2012, doi: 10.1109/TSP.2011.2177264.


\bibitem{Gradshteyn2000} I. S. Gradshteyn and I. M. Ryzhik, \emph{Tables of Integrals, Series and Products}, 6th ed. San Diago, CA, USA: Academic, 2000.

\bibitem{McKay2007} M. R. McKay, A. J. Grant and I. B. Collings, "Performance Analysis of MIMO-MRC in Double-Correlated Rayleigh Environments," in IEEE Transactions on Communications, vol. 55, no. 3, pp. 497-507, March 2007, doi: 10.1109/TCOMM.2007.892450.

\bibitem{Prudnikov1998} Prudnikov, A. P. , Brychkov I. A. \& Marichev O. I. (1990). Integrals and series/volume 3, more special functions, New York, Gordon and Breach.

\end{thebibliography}
\end{document}